\algrenewcommand\algorithmiccomment[1]{\textcolor{lightgray}{\hfill // #1}}
\theoremstyle{definition}
\newtheorem{theorem}{Theorem}
\newtheorem{example}{Example}
\newtheorem{lemma}{Lemma}
\newtheorem{corollary}{Corollary}
\newtheorem{definition}{Definition}
\newtheorem{remark}{Remark}
\newcommand{\ceil}[1]{ {\lceil#1\rceil}}
\newcommand{\floor}[1]{ {\lfloor#1\rfloor}}
\newcommand{\bi}{{\{0,1\}}}
\newcommand{\hw}{\operatorname{wt}}
\newcommand{\abs}[1]{\vert #1 \vert}
\newcommand{\bbF}{\mathbb{F}}
\newcommand{\bfc}{\mathbf{c}}
\newcommand{\bff}{\mathbf{f}}
\newcommand{\bfm}{\mathbf{m}}
\newcommand{\bfr}{\mathbf{r}}
\newcommand{\bfx}{\mathbf{x}}
\newcommand{\bfy}{\mathbf{y}}
\newcommand{\bfz}{\mathbf{z}}
\newcommand{\cC}{\mathcal{C}}
\newcommand{\cD}{\mathcal{D}}
\newcommand{\cJ}{\mathcal{J}}
\newcommand{\cP}{\mathcal{P}}
\newcommand{\cQ}{\mathcal{Q}}
\newcommand{\cX}{\mathcal{X}}
\newcommand{\cY}{\mathcal{Y}}
\begin{document}

\title{Break-Resilient Codes with Loss Tolerance}



\author{%
  \IEEEauthorblockN{\textbf{Canran Wang}\IEEEauthorrefmark{1},
                    \textbf{Minghui Liwang}\IEEEauthorrefmark{3},
                    and \textbf{Netanel Raviv}\IEEEauthorrefmark{1}}
  \IEEEauthorblockA{\IEEEauthorrefmark{1}%
                   Department of Computer Science and Engineering, Washington University in St. Louis,}
  \IEEEauthorblockA{\IEEEauthorrefmark{3}%
                    Department of Control Science and Engineering, Tongji University,}
}

\maketitle
\begin{abstract}
Emerging applications in manufacturing, wireless communication, and molecular data storage require robust coding schemes that remain effective under physical distortions where codewords may be arbitrarily fragmented and partially missing.
To address such challenges, we propose a new family of error-correcting codes, termed $(t,s)$-break-resilient codes ($(t,s)$-BRCs).
A $(t,s)$-BRC guarantees correct decoding of the original message even after up to~$t$ arbitrary breaks of the codeword and the complete loss of some fragments whose total length is at most~$s$.
This model unifies and generalizes previous approaches, extending break-resilient codes (which handle arbitrary fragmentation without fragment loss) and deletion codes (which correct bit losses in unknown positions without fragmentation) into a single information-theoretic framework.
We develop a theoretical foundation for $(t,s)$-BRCs, including a formal adversarial channel model, lower bounds on the necessary redundancy, and explicit code constructions that approach these bounds.
\end{abstract}

\section{Introduction}

Modern data embedding techniques increasingly operate outside traditional digital channels, in environments where data can be literally broken apart.
Consider a digital identification code embedded within a 3D-printed component for authentication or traceability: if the object is broken (intentionally or by accident) into several pieces, the codeword stored in its structure will likewise split into multiple fragments with no obvious order.
Further challenge arises by the possibility that some fragments may be lost entirely---an adversary or environmental damage could destroy certain pieces, removing those bits from observation~\cite{wang2025secure}.
Similar issues arise in DNA-based storage, where data are encoded across many short strands and are retrieved by random sampling from a DNA pool, yielding unordered reads that must be robustly reassembled~\cite{shomorony2021dna}. Wireless links can exhibit the same behavior: in UAV and satellite communications, rapid motion or intermittent blockage can split a transmitted frame into short bursts that arrive out of order, with some bursts missed, leaving only a partial, unordered view of the codeword~\cite{moon2025generalized,hashima2025next}.
Robust operation in these settings therefore calls for coding schemes that tolerate both fragmentation and missing pieces.

These examples highlight a challenging problem space on encoding information so that it remains recoverable even if the encoded data is fragmented and some pieces are entirely missing.
Traditional error-correcting codes do not readily address this combined fragmentation-and-loss problem.
On one hand, deletion codes~\cite{sima2019two,sima2020optimal,sima2023correcting,wang2023non} are designed to recover from unknown deletions within codewords, but they typically assume that the overall order of the remaining symbols is preserved.
On the other hand, recent studies have proposed novel coding schemes to handle out-of-order codeword fragments, whether the fragmentation occurs at evenly spaced intervals~\cite{sima2021coding,sima2020robust,lenz2019coding}, via a probabilistic process~\cite{shomorony2021torn,shomorony2020communicating,ravi2021capacity,ravi2024recovering}, or in an adversarial manner~\cite{bar2023adversarial,wang2024break}.
However, these approaches either provide no resilience to fragment loss, or only tolerate probabilistic losses, or limit losses to fragments whose lengths are logarithmic in the codeword length.

In this correspondence, we consider the~\emph{adversarial} version of the fragmentation-and-loss problem, in which we assume a knowledgeable adversary who is fully aware of the coding scheme and is only limited by the number of breaks and the number of lost bits.
Specifically, we introduce the~$(t,s)$-break-resilient code ($(t,s)$-BRC) to jointly handle~\emph{arbitrary} fragmentation and~\emph{arbitrary} fragment losses in a unified framework.
Our model is a natural extension of the adversarial fragmentation model introduced in~\cite{wang2024break}, which allows the adversary to break a codeword at~$t$ arbitrary locations.
On top of that, we allow the adversary to omit some subset of those fragments from the decoder whose total length does not exceed~$s$.
The remaining fragments are given to the decoder in an unordered fashion, whose task is to reconstruct the exact original message in~\emph{all cases} that respect~$t$ and~$s$.
 
We emphasize that our formulation is both adversarial and information-theoretic: it imposes no probabilistic assumptions on the break locations or the selection of lost fragments, nor does it rely on any specific physical mechanisms (such as the geometry of the fracture surface) beyond the abstract constraints defined by the parameters~$t$ and~$s$.
In particular, the knowledgeable adversary may deliberately select fragmentation and omission patterns that are maximally disruptive within the allowed limits; the worst-case assumption ensures that~$(t,s)$-BRCs are suitable for security-sensitive applications.

Moreover, our formulation is not a straightforward combination of break-resilient codes and deletion codes, since it captures the intrinsic coupling between fragmentation and deletion, as a fragment must first be separated from the codeword before it can be deleted.

\subsection{Our Contributions}
The primary contribution of this work is a theoretical framework for encoding and decoding under the combined fragmenting-and-loss adversarial model, along with corresponding explicit code constructions.
Our results demonstrate that it is possible to design codes that are both break-resilient and loss-tolerant, with redundancy~$\Theta(t\log^2n + s\log n)$.
Interestingly, our construction adopts an error locator polynomial approach similar of classical Reed--Solomon decoding\footnote{Due to space constraints, some proofs are presented in the full version~\cite{wang2025break}.}
.

\section{Problem Definition}
Our setup includes an encoder that maps an information string $\bfx \in \bi^k$ to a codeword $\bfc \in \bi^n$, with $n > k$.
For security parameters~$t$ and~$s$, an adversary may break~$\bfc$ at arbitrary~$t$ locations or less, producing at most~$t+1$ fragments.
He may then omit an arbitrary subset of these fragments whose total length is at most~$s$ bits.
The decoder is provided with the remaining fragments, which together contain at least~$n-s$ bits of the original codeword, but arrive as an unordered multiset with no positional information.

\begin{example}
    For~$\bfx=0100011100$,~$t=3$ and~$s=3$, the possible fragment multisets include
\small\begin{equation*}
    \{\{  1000,1110 \}\}, \{\{00111,00 \}\}\text{, and }\{\{ 01,11100 \}\}.
\end{equation*}\normalsize
\end{example}

A code $\cC \subseteq \bi^n$ is said to be a $(t,s)$-break-resilient code if, for every codeword $\bfc \in \cC$, the original message $\bfx$ can be uniquely recovered from any multiset of fragments produced by up to $t$ breaks and arbitrary omission of fragments totaling at most $s$ bits.
The primary metric of interest of a~$(t,s)$-BRC is its~\emph{redundancy}, i.e., the quantity~$n-\log|\cC|$, where~$|\cC|$ denotes the code size, and the~$\log$ is in base two.

We assume that the fragments are \emph{oriented}.
That is, for a given fragment~$\bff=(c_{i},c_{i+1},\ldots,c_{i+l})$ for some~$i$ and some~$l$, taken from a codeword~$\bfc=(c_1,\ldots, c_n)$, the decoder does not know the correct values of the index~$i$, but knows that the index of the leftmost bit is smaller than that of the rightmost bit.
Throughout this paper, we use standard string notation: concatenation is denoted by $\circ$;~$|\bfx|$ and~$\hw(x)$ indicates the length and Hamming weight of a string $\bfx$, respectively; and for any string $\bfx = (x_1,\ldots,x_n)$ and indices $1 \le a \le b \le n$, we write $\bfx[a:b] = (x_a,\ldots,x_b)$ and $\bfx[a:] = (x_a,\ldots,x_{|\bfx|})$.
We also write $[n] \triangleq \{1, 2, \ldots, n\}$.

\section{Bounds}
In this section, we derive a lower bound on the redundancy of a $(t,s)$-BRC $\cC$. 
We begin with the following definition.

\begin{definition}[$(t,s)$-confusable]
    Two sequences~$\bfx, \bfy \in \bi^n$ are said to be~\emph{$(t,s)$-confusable} if there exist procedures, each operating under the~$(t,s)$-constraint mentioned earlier, that transform~$\bfx$ and~$\bfy$, respectively, into an identical multiset of fragments.
\end{definition}

Clearly, a~$(t,s)$-BRC cannot contain any pair of $(t,s)$-confusable sequences as codewords.
As we will demonstrate in the following two lemmas, this observation implies that the code, or its subcodes, must have a minimum Hamming distance exceeding a certain threshold.
This property further allows us to apply the classical sphere-packing bound in the subsequent analysis to derive a lower bound on its redundancy.

\begin{lemma}\label{lemma:d-less-s}
    Let~$\cC\subseteq\bi^n$ be a $(t,s)$-BRC, and let~$\cC=\cD_1\cup\cD_2\cup\ldots,\cup,\cD_{\ceil{n/(s+1)}}$ be its partition to subcodes, where
    \small\begin{equation*}
        \cD_i = \{\bfc\in\cC \mid (s+1)\cdot i\le\hw(\bfc) < (s+1)(i+1)\}.
    \end{equation*}\normalsize
    Then, the minimum Hamming distance of each~$\cD_i$
    \small\begin{equation}
        d_\text{min}(\cD_i)\geq \ceil{\frac{t}{2}}.
    \end{equation}\normalsize
\end{lemma}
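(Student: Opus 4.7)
The plan is to argue the contrapositive: suppose for contradiction that $\bfx, \bfy \in \cD_i$ are distinct and have Hamming distance strictly less than $\ceil{t/2}$. I will exhibit a $(t,s)$-admissible procedure on each of $\bfx$ and $\bfy$ that produces the same multiset of fragments, thereby showing that $\bfx$ and $\bfy$ are $(t,s)$-confusable, which contradicts the $(t,s)$-BRC property of $\cC$. The intuition is that a small Hamming distance lets me ``surgically excise'' the differing bits by isolating them as singleton fragments, and then omit a balanced set of those singletons so that what remains is identical on the two sides.

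Concretely, I would let $D \triangleq \{j : x_j \ne y_j\}$ with $d \triangleq |D| \le \ceil{t/2}-1$, which gives $2d \le t$ for both parities of $t$. Inserting breaks immediately to the left and right of each index in $D$ uses at most $2d \le t$ breaks (strictly fewer when two indices of $D$ are adjacent, or when an index of $D$ coincides with $1$ or $n$) and isolates each differing bit as a singleton fragment. All remaining (non-singleton) fragments of $\bfx$ and $\bfy$ are substrings drawn entirely from the index set on which $\bfx$ and $\bfy$ agree, hence they coincide as multisets by construction.

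It then remains to reconcile the singletons. Let $D_0 \triangleq \{j \in D : x_j = 0\}$ and $D_1 \triangleq D \setminus D_0$, and assume WLOG that $|D_0| \le |D_1|$. Then $\bfx$ contributes $|D_0|$ zero-singletons and $|D_1|$ one-singletons, while $\bfy$ contributes $|D_1|$ zero-singletons and $|D_0|$ one-singletons. I would omit $|D_1|-|D_0|$ of the one-singletons from $\bfx$ and $|D_1|-|D_0|$ of the zero-singletons from $\bfy$; the surviving singletons on each side are then exactly $|D_0|$ zeros and $|D_0|$ ones, so the two final multisets match. The total omitted length from either string equals $|D_1|-|D_0| = |\hw(\bfx) - \hw(\bfy)|$, which is at most $s$ precisely because membership in $\cD_i$ constrains the Hamming weights of $\bfx$ and $\bfy$ to a window of $s+1$ consecutive values.

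The most delicate step I expect is the bookkeeping of the break count: the bound $2d \le t$ is tight when all indices of $D$ are interior and pairwise non-adjacent, and it needs a short argument to confirm that adjacent indices (which share a cut) and boundary indices (which need only one cut) can only decrease the number of breaks, never increase it. This cost check, together with verifying the omission budget via the $\cD_i$ weight window, are the two load-bearing inequalities; everything else is a direct multiset comparison.
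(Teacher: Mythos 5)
Your proof is correct and is essentially the same as the paper's: both isolate each differing bit as a singleton via breaks immediately to its left and right (using at most $2d \le t$ breaks), then omit a balanced number of one-singletons from one string and zero-singletons from the other so the surviving multisets coincide, with the omission budget $|\hw(\bfx)-\hw(\bfy)| \le s$ guaranteed by the $\cD_i$ weight window. The only cosmetic difference is notation (your $|D_1|-|D_0|$ is the paper's $s'$), and your remark about shared/boundary cuts only reducing the break count is a harmless refinement the paper leaves implicit.
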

\begin{proof}
    Assume for contradiction that there exist distinct~$\bfx,\bfy\in\cD_i$ such that~$d=d_H(\bfx,\bfy)<\ceil{\frac{t}{2}}$ for some~$i\in[0,\ceil{n/(s+1)}]$.
    As we will demonstrate next, this assumption implies that~$\bfx$ and~$\bfy$ are~$(t,s)$-confusable.
    
    Let~$j_1,\ldots,j_{d}\in[n]$ be the indices where~$\bfx$ and~$\bfy$ differ.
    Then,~$\bfx$ and~$\bfy$ can be written as
    \small\begin{align*}
        \bfx&=\bfc_1\circ x_{j_1}\circ\bfc_2\circ\ldots\circ \bfc_d\circ x_{j_d}\circ\bfc_{d+1},\\
        \bfy&=\bfc_1\circ y_{j_1}\circ\bfc_2\circ\ldots\circ \bfc_d\circ y_{j_d}\circ\bfc_{d+1}
    \end{align*}\normalsize
    for some (potentially empty) sequences~$\bfc_1,\ldots,\bfc_{d+1}$, in which~$x_{j_l}\ne y_{j_l}$ for every~$l\in[d]$.
    Consider an adversary that breaks~$\bfx$ and~$\bfy$ to the immediate left and the immediate right of every entry~$j_l$.
    Since~$2\cdot d<2\cdot\ceil{\frac{t}{2}}$ --- the right side of the inequality equals to~$t$ when~$t$ is even and~$t+1$ when~$t$ is odd --- the operation is permitted under the~$t$-constraint, and results in the following two multisets:
    \small\begin{equation}\label{eq:xy-multiset}
        \begin{split}
                \cX &= \{\{\bfc_1,\ldots,\bfc_{d+1},x_{j_1},\ldots,x_{j_d}\}\},\\
        \cY &= \{\{\bfc_1,\ldots,\bfc_{d+1},y_{j_1},\ldots,y_{j_d}\}\}.
        \end{split}
    \end{equation}\normalsize
    Without loss of generality, let~$\hw(\bfx)\geq\hw(\bfy)$.
    Define
    \small\begin{equation}\label{eq:define-sp}
        \begin{split}
        s'&\triangleq\hw(x_{j_1},\ldots,x_{j_d})-\hw(y_{j_1},\ldots,y_{j_d})\\
        &\overset{\eqref{eq:xy-multiset}}{=}\hw(\bfx)-\hw(\bfy)\leq s,
        \end{split}
    \end{equation}\normalsize
    where the last inequality follows from the fact that~$\bfx,\bfy\in\cD_i$ and~$\cD_i$ contains all codewords of~$\cC$ with Hamming weight ranging from~$(s+1)\cdot i$ to~$(s+1)(i+1)-1$.
    
    As a result, the adversary can create two identical multisets by omitting~$s'$~$1$'s from~$\{x_{j_1},\ldots,x_{j_d}\}$, and~$s'$~$0$'s from~$\{y_{j_1},\ldots,y_{j_d}\}$, i.e.,
    \small\begin{equation*}
    \cX\setminus\underbrace{\{\{1,\ldots,1\}\}}_{s'\mbox{ times}} = \cY\setminus\underbrace{\{\{0,\ldots,0\}\}}_{s'\mbox{ times}}.   \end{equation*}\normalsize
    Note that the above removal operations are permitted due to the following reasons.
    \begin{enumerate}
        \item There are sufficiently many~$1$'s in~$\cX$ since\small\begin{equation*}\hw(x_{j_1},\ldots,x_{j_d})\overset{\eqref{eq:define-sp}}{=}s'+\hw(x_{j_1},\ldots,y_{j_d})\geq s'.\end{equation*}\normalsize
        \item There are sufficiently many~$0$'s in~$\cY$ since
        \small\begin{align*}
            d-\hw(y_{j_1},\ldots,y_{j_d})\overset{\eqref{eq:define-sp}}{=}d-\hw(x_{j_1},\ldots,y_{j_d})+s'\geq s'.
        \end{align*}\normalsize
        \item The~$s$-constraint is satisfied since~$s'\leq s$.
    \end{enumerate}
    Thus,~$\bfx$ and~$\bfy$ are~$(t,s)$-confusable, a contradiction.
\end{proof}

The above lemma holds for all values of~$t$ and~$s$.
In the regime where~$t = o(n)$ and~$s = \delta n$ (i.e., when~$t$ is asymptotically small relative to~$n$ and~$s$ constitutes a fixed fraction of~$n$), the following lemma establishes a bound on the minimum Hamming distance of~$\cC$. 

\begin{lemma}\label{lemma:d-more-s}
    Let~$\cC\subseteq\bi^n$ be a $(t,s)$-BRC with~$s\ge\ceil{\frac{t}{2}}$.
    Then, its minimum Hamming distance
    \small\begin{equation*}
     d_\text{min}(\cC)>\frac{\mu\cdot n - 2}{\mu+2},~\mbox{where}~\mu =\frac{t}{n-s}.
    \end{equation*}\normalsize
\end{lemma}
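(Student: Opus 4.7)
The plan is to prove the contrapositive: show that if two codewords $\bfx, \bfy \in \cC$ satisfy $d_H(\bfx, \bfy) = d$ and $d \leq (\mu n - 2)/(\mu + 2)$, then they are $(t,s)$-confusable, contradicting $\bfx, \bfy \in \cC$. The adversarial strategy I have in mind is to break both $\bfx$ and $\bfy$ at the \emph{same} carefully chosen positions, partitioning each codeword into ``bad'' fragments that jointly cover every position where $\bfx$ and $\bfy$ disagree and ``good'' fragments on which they coincide, and then to omit every bad fragment. Because the good fragments are bit-for-bit identical between $\bfx$ and $\bfy$, the resulting multisets match automatically, so only the break budget $t$ and the omission budget $s$ remain to be verified.

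Let $j_1 < \cdots < j_d$ be the disagreement positions and set $g_i = j_{i+1} - j_i$, so that $\sum_{i=1}^{d-1} g_i = j_d - j_1 \leq n - 1$. Grouping the $d$ differences into $k$ consecutive clusters and wrapping each cluster in its minimal covering interval uses at most $2k$ breaks (at most two per cluster) and yields bad fragments of total length $L = k + (j_d - j_1) - \sum_{i \in \text{cut}} g_i$, where ``cut'' denotes the set of $k-1$ gaps separating consecutive clusters. Choosing the cuts to be the $k-1$ largest gaps and invoking a standard averaging estimate --- the largest $k-1$ of $d-1$ nonnegative reals always sum to at least a $(k-1)/(d-1)$ fraction of their total --- yields the closed-form upper bound $L \leq k + (n-1)(d-k)/(d-1)$.

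Enforcing $L \leq s$ together with the break constraint $2k \leq t$ produces a bound on $d$. Setting $k$ close to $\lfloor t/2 \rfloor$ and unpacking $\mu = t/(n-s)$ rewrites the hypothesis $d \leq (\mu n - 2)/(\mu + 2)$ as $t(n-d) \geq 2(d+1)(n-s)$, which after clearing denominators is equivalent, up to lower-order terms, to $k + (n-1)(d-k)/(d-1) \leq s$. Two regimes must be handled separately: when $d \leq \lfloor t/2 \rfloor$, each difference fits in its own single-bit bad fragment using only $2d \leq t$ breaks, and the assumption $s \geq \lceil t/2 \rceil \geq d$ gives $L = d \leq s$ at once, reusing the idea of Lemma~\ref{lemma:d-less-s}; when $d > \lfloor t/2 \rfloor$, the clustering estimate above closes the argument.

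The main obstacle I anticipate is the bookkeeping of integer constants. The rearrangement bound on $L$ is tight precisely when the differences are evenly spaced with $j_1 = 1$ and $j_d = n$, and that extremal configuration saves two breaks at the boundaries (the leftmost bad fragment needs no break on its left and the rightmost none on its right), which in turn permits a slightly larger $k$. These boundary savings, together with the rounding of $t/2$ to an integer, shift the final threshold by small additive amounts that account for the ``$-2$'' in the numerator $\mu n - 2$; aligning my derived threshold with the lemma's exact form will therefore require either a short case split on whether $j_1 = 1$ and/or $j_d = n$, with an adaptive choice $k \in \{\lfloor t/2 \rfloor, \lfloor t/2 \rfloor + 1\}$, or a direct verification that the clustering estimate is slack enough to imply the stated bound in all cases.
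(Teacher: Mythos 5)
Your core strategy matches the paper's: partition the disagreement positions into clusters, omit the covering intervals, and control the omitted length with an averaging estimate. The critical divergence is in what you average over. You average over the $d-1$ \emph{internal} gaps $g_i = j_{i+1}-j_i$, yielding $L \le k + (n-1)\frac{d-k}{d-1}$, whereas the paper averages over all $d+1$ ``good'' segments $\bfc_1,\ldots,\bfc_{d+1}$ (including the two boundary blocks) with total length $n-d$, yielding the tighter $n - \frac{t}{2}\cdot\frac{n-d}{d+1}$. These are not off by ``integer constants'': with $K=\lfloor t/2\rfloor$ your bound gives a threshold of $\frac{nt/2 - s}{\,n-s+t/2-1\,}$, while the lemma requires $\frac{nt/2 - (n-s)}{\,n-s+t/2\,}$, and the latter exceeds the former by roughly $\frac{2s-n}{\,n-s+t/2\,}$, which is unboundedly positive whenever $s > n/2$. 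Concretely, for $n=100$, $s=80$, $t=10$ the lemma demands confusability at $d=19$, but your estimate gives only $L \le 5 + 99\cdot\tfrac{14}{18} \approx 80.4 > 80$.

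The fix you anticipate --- a case split on $j_1 = 1$ and/or $j_d = n$ --- is aimed at the wrong quantity. The slack you are missing is not whether the extremal positions happen to touch the ends of the codeword, but whether the \emph{adversary should omit} $\bfc_1$ and $\bfc_{d+1}$ even when they are nonempty: extending the first cluster's covering interval to position $1$ (and the last to position $n$) always costs one break fewer per end while incurring $|\bfc_1|$ (resp.\ $|\bfc_{d+1}|$) extra omitted bits. Your framework, in which bad fragments span only $[j_1,j_d]$, never considers this option, so the $d-1$ internal gaps are all you get to average over. The paper instead places $\bfc_1,\bfc_{d+1}$ on equal footing with the internal segments as candidates for merging, which is exactly what yields the denominator $d+1$. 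To salvage your route you would need either to mimic that (include the boundary blocks as additional mergeable items) or to case-split on whether $j_d-j_1$ is above or below the threshold where omitting the boundary blocks becomes worthwhile; in either form it converges to the paper's argument rather than supplying an independent one.
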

\begin{proof}
Assume for contradiction that there exist~$\bfx,\bfy\in\cC$ such that~$d_H(\bfx,\bfy)\le\theta$.
We will show that this assumption implies that~$\bfx$ and~$\bfy$ are~$(t,s)$-confusable.

Let~$j_1,\ldots,j_{d}\in[n]$ be the indices where~$\bfx,\bfy$ differ, where
\small\begin{equation}\label{eq:min-distance}
    d=d_H(\bfx,\bfy)\leq\frac{\mu\cdot n - 2}{\mu+2} = \frac{t\cdot n - 2\cdot (n-s)}{t+2\cdot(n-s)}.
\end{equation}\normalsize
Then,~$\bfx$ and~$\bfy$ can be written as
    \small\begin{align*}
        \bfx&=\bfc_1\circ x_{j_1}\circ\bfc_2\circ\ldots\circ \bfc_d\circ x_{j_d}\circ\bfc_{d+1},\\
        \bfy&=\bfc_1\circ y_{j_1}\circ\bfc_2\circ\ldots\circ \bfc_d\circ y_{j_d}\circ\bfc_{d+1}
    \end{align*}\normalsize
for some (potentially empty) sequences~$\bfc_1,\ldots,\bfc_{d+1}$, in which~$x_{j_l}\ne y_{j_l}$ for every~$l\in[d]$.

Define the following~\emph{knockout} string operation: select a substring, break at its immediate left and right, and then omit the substring.
Consider the following \emph{baseline} procedure: knocking out substrings of length one at every position~$j_l$ in~$\bfx$ and~$\bfy$, which results in two identical multisets~$\{\{\bfc_1,\ldots,\bfc_{d+1}\}\}$; this procedures requires~$2 \cdot d \le t$ and~$d \le s$.

If~$2 \cdot d \le t$, the condition~$d \le s$ is automatically satisfied given our assumption that~$s \ge \ceil{t/2}$.
Hence, the baseline procedure complies with both the~$t$- and~$s$-constraints, implying that~$\bfx$ and~$\bfy$ are~$(t,s)$-confusable—a contradiction.

\begin{figure}[t]
	\centering
	\begin{subfigure}{0.49\textwidth}
		\includegraphics[width=1\textwidth]{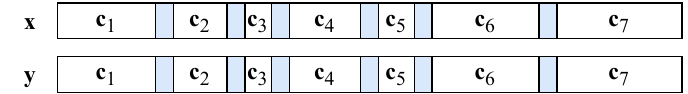}
		\caption{}
	\end{subfigure}
    \hfill
	\begin{subfigure}{0.49\textwidth}
		\includegraphics[width=1\textwidth]{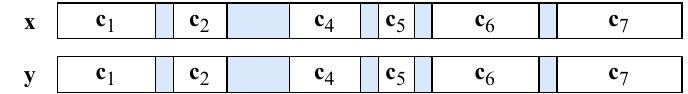}
		\caption{}
	\end{subfigure}\\
    \centering
	\begin{subfigure}{0.49\textwidth}
		\includegraphics[width=1\textwidth]{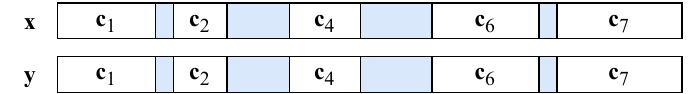}
		\caption{}
	\end{subfigure}
        \hfill
	\begin{subfigure}{0.49\textwidth}
		\includegraphics[width=1\textwidth]{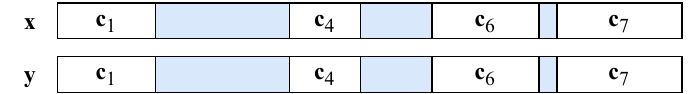}
		\caption{}
	\end{subfigure}
 	\caption{Illustration of the merge operations for the case~$t = 6$ and sufficiently large~$s$, where the colored substrings indicate the substrings to be knocked out.
    (a) Initially, the baseline procedure involves knocking out all different bits between~$\bfx$ and~$\bfy$, which requires~$12 > 6$ breaks and thus exceeds the~$t$-constraint.
    (b) The result of applying a merge operation to~$\bfc_3$, the \textit{shortest} among the~$\bfc_i$'s, reducing the number of required breaks to~$10$ at the price of increasing the number of omitted bits by~$|\bfc_3|$.
    (c) The result of applying a merge operation to~$\bfc_5$, the \textit{second} shortest among the~$\bfc_i$'s, reducing the number of required breaks to~$8$ at the price of increasing the number of omitted bits by~$|\bfc_5|$.
    (d) The result of applying a merge operation to~$\bfc_2$, the \textit{third} shortest among the~$\bfc_i$'s, reducing the number of required breaks to~$6$ at the price of increasing the number of omitted bits by~$|\bfc_2|$.}
    \label{fig:merge}
\end{figure}

If~$2 \cdot d > t$, the adversary cannot employ the baseline procedure described earlier due to the~$t$-constraint.
Yet, the adversary may attempt to update the procedure to reduce its number of required breaks, at the cost of increasing the number of required omissions.
To this end, we introduce the following \emph{merge} operation:
Given a substring~$\bfc_l$, instead of knocking out the two neighboring substrings separately, the adversary removes all three (the two neighbors and~$\bfc_l$ itself) as a single contiguous substring.
This reduces the required number of breaks by~$2$, at the cost of omitting an additional~$\abs{\bfc_l}$ bits\footnote{Note that the merge operation is not a string operation, but rather an operation on the adversary’s procedure which~\emph{merges} two konckout string operations into one.}.

The adversary repeats this merge operation, prioritizing the shortest~$\bfc_l$'s, until the required number of breaks meets the~$t$-constraint.
An illustration of this process is provided in Figure~\ref{fig:merge}.
Let~$\bfc_{i_1}, \bfc_{i_2}, \ldots, \bfc_{i_w}$ denote the~$w$ additional substrings omitted due to the merge operations.
The number of required breaks in the updated procedure is then~$2d - 2w$.
Since the merging continues until the~$t$-constraint is met, it follows that~$w = d -\floor{\frac{t}{2}}$.

Recall that the baseline procedure omits~$d$ bits. Hence, the number of omitted bits in the updated procedure is

\small\begin{align*}
    &d + (\abs{\bfc_{i_1}}+\cdots +\abs{\bfc_{i_w}})=d +w\cdot\operatorname{avg}(\abs{\bfc_{i_1}},\ldots,\abs{\bfc_{i_w}})\\
    \overset{(a)}{\le} & d +w\cdot\operatorname{avg}(\abs{\bfc_{1}},\ldots,\abs{\bfc_{d+1}})=d+w\cdot \frac{n-d}{d+1} \\
    =&d+(d -\floor{\frac{t}{2}})\cdot \frac{n-d}{d+1}
    \leq d+(d -{\frac{t}{2}}+1)\cdot \frac{n-d}{d+1}\\
    =&n-\frac{t}{2}\cdot\frac{n-d}{d+1},
\end{align*}\normalsize
where~(a) follows from the fact that~$\bfc_{k_1},\ldots,\bfc_{k_w}$ are the shortest among~$\{\bfc_1,\bfc_2,\ldots,\bfc_{d+1}\}$.
By Equation~\eqref{eq:min-distance}, we have

\small\begin{equation*}
    \begin{split}
        n-\frac{t}{2}\cdot\frac{n-d}{d+1}&\leq n-\frac{t}{2}\cdot\frac{1}{d+1}\cdot \left(n-\frac{t\cdot n - 2\cdot (n-s)}{t+2\cdot(n-s)}\right)\\
        & \leq n-\frac{t}{2}\cdot \frac{t+2\cdot(n-s)}{t\cdot(n+1)}
        \cdot\frac{2\cdot(n-s)\cdot(n+1)}{t+2\cdot(n-s)}\\
        &=n-(n-s)=s,
    \end{split}
\end{equation*}\normalsize
which shows that the number of omitted bits in the updated procedure is no more than~$s$, satisfying the~$s$-constraint.
Hence,~$\bfx$ and~$\bfy$ are~$(t, s)$-confusable, a contradiction.
\end{proof}

Based on the two preceding lemmas, we proceed to derive a lower bound on the redundancy of any~$(t, s)$-BRC.

\begin{theorem}\label{theorem:bound-on-redundancy}
    A~$(t,s)$-BRC~$\cC$ satisfies
    \small\begin{equation*}
        n-\log|\cC|\ge \Omega(n \cdot \mu\log\textstyle\frac{1}{\mu}),~\mbox{where~$\mu=\frac{t}{n-s}$.}
    \end{equation*}\normalsize
\end{theorem}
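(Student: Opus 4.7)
The plan is to convert the minimum Hamming distance guarantees of Lemmas~\ref{lemma:d-less-s} and~\ref{lemma:d-more-s} into a redundancy lower bound by invoking the classical binary sphere-packing (Hamming) bound, in the form $|B(n,r)|\ge\binom{n}{r}\ge(n/r)^{r}$. The argument naturally splits on whether the hypothesis $s\ge\ceil{t/2}$ of Lemma~\ref{lemma:d-more-s} holds.

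\emph{Case 1 ($s\ge\ceil{t/2}$).} Lemma~\ref{lemma:d-more-s} gives $d_{\min}(\cC)>(\mu n-2)/(\mu+2)$. In the regime of interest, where $\mu n\to\infty$ and $\mu$ is bounded above by a constant, the $-2$ in the numerator and the $\mu$ in the denominator contribute only constants, so $d_{\min}(\cC)=\Omega(\mu n)$. Applying sphere-packing directly to $\cC$ with packing radius $r=\floor{(d_{\min}-1)/2}=\Theta(\mu n)$, I would bound $|\cC|\le 2^{n}/|B(n,r)|\le 2^{n}/(n/r)^{r}$, whence the redundancy is at least $r\log(n/r)=\Omega(n\mu\log(1/\mu))$.

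\emph{Case 2 ($s<\ceil{t/2}$).} Here $n-s>n-t/2$, so (in the meaningful regime $t=o(n)$) we have $\mu=\Theta(t/n)$ and hence $n\mu\log(1/\mu)=\Theta(t\log(n/t))$. Lemma~\ref{lemma:d-less-s} partitions $\cC$ into at most $\ceil{n/(s+1)}\le n+1$ subcodes, each of minimum distance at least $\ceil{t/2}$. By pigeonhole, some $\cD_{i^{*}}$ satisfies $|\cD_{i^{*}}|\ge |\cC|/(n+1)$; applying sphere-packing to $\cD_{i^{*}}$ with radius $\Theta(t)$ yields $|\cD_{i^{*}}|\le 2^{n}/(n/t)^{\Theta(t)}$, so the overall redundancy is at least $\Omega(t\log(n/t))-O(\log n)=\Omega(n\mu\log(1/\mu))$.

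The main obstacle I anticipate is not conceptual but bookkeeping: absorbing the lower-order corrections into the $\Omega$-notation. Specifically, one must verify that the additive $-2$ and the $(\mu+2)$ denominator inside Lemma~\ref{lemma:d-more-s}, and the additive $O(\log n)$ pigeonhole loss in Case~2, are dominated by the main $\Omega(n\mu\log(1/\mu))$ term. This holds under the implicit asymptotic assumption $t\log((n-s)/t)=\omega(\log n)$, which is precisely the regime where the claimed bound is non-trivial. A secondary technical step is the uniform simplification of $(\mu n-2)/(\mu+2)$ down to $\Theta(\mu n)$, from which the sphere-packing exponent $r\log(n/r)$ transparently matches the target $\Omega(n\mu\log(1/\mu))$.
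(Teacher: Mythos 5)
Your proposal matches the paper's proof in structure and substance: both split on whether $s\ge\ceil{t/2}$, use Lemma~\ref{lemma:d-less-s} together with a pigeonhole argument over the weight-slices $\cD_i$ in one regime and Lemma~\ref{lemma:d-more-s} directly in the other, and in each case convert the minimum-distance guarantee to a redundancy bound via sphere-packing with $\binom{n}{r}\ge(n/r)^{r}$. The only (harmless) differences are bookkeeping: you correctly halve the distance to get the packing radius whereas the paper plugs $\theta$ in directly, and you bound the number of slices by $n+1$ while the paper uses $n/s$; neither changes the $\Omega(n\mu\log(1/\mu))$ conclusion.
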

\begin{proof}
    In the case where~$0\leq s < \ceil{\frac{t}{2}}$, let~$\cD_i$ be the set of all codewords of~$\cC$ of Hamming weight ranging from~$(s+1)\cdot i$ to~$(s+1)(i+1)-1$ for~$i\in[\ceil{\frac{n}{s+1}}]$, as defined in Lemma~\ref{lemma:d-less-s}.
    Let~$\cD_{i_\text{max}}$ be the largest set among the~$\cD_i$'s.
    We have that
    \small\begin{align*}        \log|\cC|&=\log(\textstyle\sum_{i=1}^{\ceil{\frac{n}{s+1}}}|\cD_i|)\le \log(\ceil{\frac{n}{s+1}}\cdot |\cD_{i_\text{max}}|)\\
    &\le\log(\frac{n}{s}\cdot |\cD_{i_\text{max}}|) \leq \log n - \log s+\log|\cD_{i_\text{max}}|.
    \end{align*}\normalsize
    Furthermore, it follows from Lemma~\ref{lemma:d-less-s} that~$\cD_{i_\text{max}}$ is of minimum Hamming distance at least~$\ceil{\frac{t}{2}}$.
    Therefore, by the sphere-packing bound we have that 
    \small\begin{align*}
        |\cD_{i_{\text{max}}}|\le \frac{2^n}{\sum_{j=0}^{t'}\binom{n}{j}}, \text{ where }t'\triangleq \bigg\lfloor\frac{\ceil{\frac{t}{2}}-1}{2}\bigg\rfloor\approx\frac{t}{4}.
    \end{align*}\normalsize
    Hence, it follows that
    \small\begin{align}\label{eq:r-less-s}
    \textstyle    n-\log|\cC|&\ge n-\log n + \log s-\log|\cD_{i_\text{max}}| \nonumber\\
        &\ge n-\log n +\log s-n+\log\left( \textstyle\sum_{j=0}^{t'}\binom{n}{j} \right)\nonumber\\
        &\ge \log s+ \log\textstyle\binom{n}{t'}-\log n\nonumber\\
        &\overset{(\star)}{\ge}  \log((n/t')^{t'})-\log n\nonumber\\
        &=t'\log n-t'\log(t')-\log n=\Omega(t\log \tfrac{n}{t}),
    \end{align}\normalsize
    where~$(\star)$ is a easily provable. 
    
    In the case where~$s \geq \ceil{\frac{t}{2}}$, we again apply the the sphere-packing bound, with~$\theta$ as defined in Lemma~\ref{lemma:d-more-s}:
    \small\begin{align}\label{eq:r-more-s}
  n-\log|\cC|&\geq n - \log \frac{2^n}{\sum_{j=0}^\theta\binom{n}{j}}=\textstyle\log \left(\sum_{j=0}^\theta\binom{n}{j}\right)\nonumber\\
  &\geq \textstyle\log\binom{n}{\theta}\geq \log(n/\theta)^\theta =\theta \cdot\log (n\cdot \frac{\mu +2 }{\mu\cdot n -2}) \nonumber\nonumber\\ 
  &\geq \textstyle\frac{\mu\cdot n -2}{\mu+2}\cdot \log (1+\frac{2}{\mu})
    \geq \frac{\mu\cdot n -2}{3} \cdot  \log\left( \frac{2}{\mu}\right)\nonumber\\
    &= \textstyle\Omega( n\cdot \mu\log\frac{1}{\mu}).
\end{align}\normalsize
Note that when~$0\leq s<\ceil{\frac{t}{2}}$,~\eqref{eq:r-more-s} becomes
    \small\begin{equation*}
        \begin{split}
        \textstyle\Omega(n\cdot \frac{t}{n-s}\log\frac{n-s}{t}) = \textstyle\Omega( \frac{n}{n-s}\cdot t \cdot \log (\frac{n}{t}-\frac{s}{t}))=\textstyle\Omega(t\log\frac{n}{t}),
        \end{split}
    \end{equation*}\normalsize
which matches~\eqref{eq:r-less-s}.
Together, the redundancy of a~$(t,s)$-BRC~$\cC$ satisfies~$n-\log|\cC|\ge \Omega(n \cdot \mu\log\frac{1}{\mu})$.
\end{proof}

\section{Break-Resilient Codes with Loss Tolerance}\label{section:BRC}

In this section, we introduce~$(t,s)$-BRC, a family of codes that simultaneously achieve~$t$-break resiliency and~$s$-loss tolerance.
Our construction is analogous to classical Reed--Solomon decoding:
 define a locator polynomial~$\Lambda$, compute syndromes from the surviving fragments, and recover the missing information as its roots.
Our construction incurs $O(t\log^2 n+s\log n)$ redundancy bits.
We present the encoding and decoding procedures and prove correctness.

\begin{algorithm}[t]\caption{\textsf{encode} ($(t,s)$-BRC Encoding)}\label{alg:Encode}
\begin{algorithmic}[1]
\Statex \textbf{Input:}~{A valid binary string~$\bfz\in\bi^m$.}
\Statex \textbf{Output:}~{A~$(t,s)$-BRC codeword $\bfc\in\bi^n$}.
    \State Let~$
        \cP \triangleq \{\textsf{bin}^{-1}(\bfz[i,i+M -1]): i\in [0,m-M] \}$\label{line:cp}
    \State Let~\texttt{CKSUMS} be an array s.t.\label{line:define-PS}
    \small\begin{equation*}
        \textstyle\texttt{CKSUMS}[j]=\sum_{p\in\cP}{(\beta_j-p)^{-1}}.
    \end{equation*}\normalsize
    \State $\bfr\gets \textsf{bin}(\texttt{CKSUMS}[1])\circ\bfm_1\circ\cdots\circ \textsf{bin}(\texttt{CKSUMS}[L])\circ\bfm_{L}$\label{line:define-r}
    \State \textbf{Output}~$\bfc\gets\bfz\circ\bfm_0\circ\bfr$.\label{line:define-codeword}
\end{algorithmic}
\end{algorithm}

\begin{algorithm*}\caption{\textsf{decode} ($(t,s)$-BRC Decoding)}\label{alg:Decode}
\begin{algorithmic}[1]
\Statex \textbf{Input:} A multiset \texttt{F} of fragments produced by an~$(t,s)$-constrained adversary from some~$\bfc\in \cC$.
\Statex \textbf{Output:} String $\bfz\in\bi^m$ for which Alg.~\ref{alg:Encode} yields~$\bfc$.
\Statex \textcolor{red}{$\triangleright~\textsc{classification of fragments}$}
\If{there exists $\bff\in\texttt{F}$ and index~$i$ such that~$\bfm_0=\bff[i:i+c\log m-1]$}
    \State Remove~$\bff$, and add $\bff[0:i-1]$ and $\bff[i:]$ to \texttt{F}.\label{line:transpoint}
\EndIf
\State Let~$\texttt{F}_\text{redundancy}$ be the set of fragments that contains~$\bfm_l$ for some~$l\in[0,L]$ (fragments in the redundancy region).\label{line:R-FRAGMENTS}
\State Let~$\texttt{F}_\text{information}$ be the remaining fragments whose length are at least~$3 M+1$ bits (fragments in the information region).\label{line:Z-FRAGMENTS}

\Statex \textcolor{red}{$\triangleright$~\textsc{construction of subset~$\cQ\subset\cP$} and array~$\texttt{CKSUMS}_\text{obs}$}

\State Initialize an empty set~$\cQ \gets \emptyset$ and an empty array $\texttt{CKSUMS}_\text{obs}$ of length~$L$.
\ForAll{fragments~$\bff$ in $\texttt{F}_\text{information}$}\label{line:fetch-start}
    \For{$i=0$ \textbf{to} $|\bff|-c\log m$}
        \State $\sigma \gets \bff[i,\, i+c\log m-1]$, $q \gets \textsf{bin}^{-1}(\sigma)$, and~$\cQ \gets \cQ \cup \{q\}$\label{line:fetch-end}
    \EndFor
\EndFor

\ForAll{fragments~$\bff$ in $\texttt{F}_\text{redundancy}$, and if there exists index~$i$ such that~$\bfm_l=\bff[i:i+M]$ for some~$l\in[L]$}\label{line:fetch-r-start}
        \State $\texttt{CKSUMS}_\text{obs}[l]\gets\bff[i-M:i-1] $~\textbf{if}~$i\geq M$
        \State $\texttt{CKSUMS}_\text{obs}[l+1]\gets\bff[i+M+1:i+2\cdot M] $~\textbf{if}~$i+2\cdot M < |\bff|$\label{line:fetch-r-end}
\EndFor
\Statex \textcolor{red}{$\triangleright$~\textsc{decode from locator polynomial}}
\State Let~$\Lambda(z)=z^r+b_{r-1}z^{r-1}+\cdots+b_1z+b_0$ with~$r=|\cP|-|\cQ|$ unknown variables~$b_0,\ldots,b_{r-1}$.
\State For each~$j\in\cJ\triangleq\{\texttt{CKSUMS}_\text{obs}[j]\neq\texttt{EMPTY}\}$,~$S_j\gets \texttt{CKSUMS}_\text{obs}[j]-\sum_{q\in\cQ}({\beta_j-q})^{-1}$.
\State Solve for~$b_0,\ldots,b_{r-1}$ using system of linear equations~$\{S_j\cdot \Lambda(\beta_j)- {\Lambda'(\beta_j)}=0\}_{j\in\cJ}$.
\State $\cP\gets\cQ\cup\{\text{roots of } \Lambda(z) \}$, add map~$\cP$ back to~$\bfz$\label{line:output-z}
\State \textbf{Output}~$\bfz$.
\end{algorithmic}
\end{algorithm*}

\subsection{Encoding}\label{section:encoding}

The encoding of our~$(t,s)$-BRC scheme begins by selecting a uniformly random string~$\bfz\in\bi^{m}$ to serve as the information word, and rejecting (and resampling) it unless $\bfz$ is \emph{valid}, as defined below.

Fix a (not necessarily integer) constant $c>2$ such that~$c\log m$ is an integer.
Let
\small\begin{equation*}
    M=c\log m+1,~\mbox{and}~ L=t\cdot(3M)+s,
\end{equation*}\normalsize
let $\textsf{bin}:\bbF_{2^{M}}\to \bi^{M}$
be a bijection, and let $\textsf{bin}^{-1}$ be its inverse.
Fix $L+1$ distinct elements $\{\bfm_0,\ldots,\bfm_L\}\subset \cC_{\text{MU}}$, where~$\cC_{\text{MU}}$ is a binary mutually-uncorrelated code\footnote{A mutually-uncorrelated code is a set of codewords such that no proper prefix of any codeword equals a suffix of any (possibly the same) codeword.} of block length~$M+1$.
We associate each marker $\bfm_l$ with a unique field element
\small\begin{equation*}
    \beta_l \triangleq \textsf{bin}^{-1}(\bfm_l[0:M-1])\in \bbF_{2^M}, \qquad l=0,\ldots,L,
\end{equation*}\normalsize
i.e., the binary representation of~$\beta_l$ is the first~$M$ bits of~$\bfm_l$.
\begin{definition}\label{def:valid}
A binary string~$\bfz\in\bi^m$ is~\emph{valid} if:
\begin{enumerate}[start=1,label={(\Roman*)}]
    \item \label{item:distinct}
    Every two substrings of length~$M-1$ of~$\bfz$ are distinct.
    \item \label{item:no-markers}
    $\bfz$ does not contain any of the markers $\bfm_0,\ldots,\bfm_L$.
\end{enumerate}
\end{definition}

These criteria allow the encoder to exploit $\bfz$ to generate redundancy bits and thus produce a $(t,s)$-break-resilient codeword.
Moreover, choosing MU-codewords as markers of length $M+1$ ensures that the decoder can unambiguously distinguish markers from all other substrings of the codeword.

\begin{remark}
    We comment that choosing the information word at random is mere convenience, which in practice implies that our encoding algorithm applies only for a fraction of all possible information words.
    It will be clear in Section~\ref{section:redundancy} that this fraction goes polynomially fast to~$1$ as~$m$ goes to infinity, and thus induces at most one extra bit of redundancy.
    Mapping a string of length~$k=m-1$ to a valid string of length~$m$ efficiently should be possible using techniques such as repeat-free codes~\cite{elishco2021repeat}, and is left as future work.
\end{remark}

Algorithm~\ref{alg:Encode} presents the encoding procedure of our scheme, which takes a valid string $\bfz\in\bi^m$ as input and outputs a $(t,s)$-BRC codeword.
Let~$\cP\subset\bbF_{2^M}$ denote the set\footnote{$\cP$ is a set (rather than a multiset) by Property~\ref{item:distinct} of valid strings $\bfz$.} of all substrings of length~$M=c\log m+1$ of~$\bfz$ interpreted as elements in~$\bbF_{2^M}$ (Alg.~\ref{alg:Encode}, line~\ref{line:cp}).
The following lemma allows us to cast decoding as a set reconciliation problem.
\begin{lemma}\label{lemma:bijective}
    The mapping from~$\bfz$ to~$\cP$ is injective. 
\end{lemma}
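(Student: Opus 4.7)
The plan is to view $\cP$ as the edge set of a de Bruijn graph and invoke the classical Eulerian-trail uniqueness argument. Given $\cP$, I would form a directed graph $G(\cP)$ whose vertices are the $(M-1)$-bit strings appearing as a prefix or a suffix of some $p\in\cP$, with each $p\in\cP$ contributing an edge from its length-$(M-1)$ prefix to its length-$(M-1)$ suffix. Any valid $\bfz$ naturally induces an Eulerian trail in $G(\cP(\bfz))$ that visits, in order, the consecutive length-$(M-1)$ windows of $\bfz$; the goal is to show that this is the \emph{only} Eulerian trail, so that $\bfz$ is determined by $\cP(\bfz)$ alone.

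The heart of the argument is checking that every vertex of $G(\cP(\bfz))$ has in-degree and out-degree at most $1$. If some vertex $u$ had out-degree at least $2$, two distinct $p,p'\in\cP$ would share the length-$(M-1)$ prefix $u$, which means the string $u$ appears at two different starting positions as a length-$(M-1)$ substring of $\bfz$; this contradicts Property~\ref{item:distinct} of Definition~\ref{def:valid}. In-degree at most $1$ follows by the symmetric argument on suffixes. Consequently, the vertex $\bfz[0:M-2]$ has in-degree $0$ (source) and $\bfz[m-M+1:m-1]$ has out-degree $0$ (sink), while every other vertex has both in-degree and out-degree equal to $1$; thus $G(\cP(\bfz))$ is a single simple directed path from source to sink.

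Injectivity is then immediate. If two valid strings $\bfz$ and $\bfz'$ satisfy $\cP(\bfz)=\cP(\bfz')$, they induce the same directed graph, in which the unique in-degree-$0$ vertex launches the unique Eulerian trail; reading the labels along this trail yields one and only one binary string, forcing $\bfz=\bfz'$. The main obstacle I anticipate is stating cleanly the implicit fact that Property~\ref{item:distinct} also forces all length-$M$ substrings of $\bfz$ to be pairwise distinct (two equal $M$-mers would share an $(M-1)$-prefix, which Property~\ref{item:distinct} already forbids), so that there is no loss in representing $\cP$ as a set rather than as a multiset of $M$-mers. Property~\ref{item:no-markers} plays no role in this lemma; it only becomes relevant later, when the decoder must separate the information region from the redundancy region.
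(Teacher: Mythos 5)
Your proof is correct and is essentially the paper's argument recast in de Bruijn-graph language: the paper's iterative left/right extension of $\bfy$ is exactly a walk along the unique outgoing (resp.\ incoming) edge at each vertex, and in both presentations the uniqueness of that extension step is precisely Property~\ref{item:distinct}. The graph-theoretic framing (out-/in-degree $\le 1$, hence a single simple directed path with a unique Eulerian trail) is a clean restatement of the same idea, and you are right that Property~\ref{item:no-markers} plays no role here.
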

\begin{proof}
    We show that $\cP$ uniquely determines $\bfz$ by giving a deterministic reconstruction procedure.
    Pick any $p\in\cP$, set $\bfy\gets \textsf{bin}(p)$, and remove $p$ from $\cP$.
    If there exists $p_\ell\in\cP$ s.t.
    \small\begin{equation*}
        \textsf{bin}(p_\ell)[1:c\log m]=\bfy[0:c\log m -1],
    \end{equation*}\normalsize
     then update $\bfy\gets \textsf{bin}(p_\ell)[0]\circ \bfy$ and remove $p_\ell$ from $\cP$.
    If there exists $p_r\in\cP$ s.t.,
    \small\begin{equation*}
        \textsf{bin}(p_r)[0:c\log m-1]=\bfy[|\bfy|-c\log m:|\bfy|-1],
    \end{equation*}\normalsize
    then update $\bfy\gets \bfy\circ \textsf{bin}(p_r)[c\log m]$ and remove $p_r$ from $\cP$.    
    Repeat the above extensions until~$\cP$ is empty.
    By Property~\ref{item:distinct}, a matching predecessor~$p_\ell$ or successor~$p_r$, if exists, is \emph{unique}, and hence the reconstruction process is unique as well.
\end{proof}

As stated in line~\ref{line:define-PS}, the encoder defines an array~\texttt{CKSUMS} to store the~$L$ checksums of the $p_i$'s: for each $j\in[L]$,
\small\begin{equation}
    \textstyle\texttt{CKSUMS}[j]\triangleq\sum_{p\in\cP}({\beta_j-p})^{-1}.
\end{equation}\normalsize
Due to Property~\ref{item:no-markers}, all entries of~\texttt{CKSUMS} are well-defined since~$p \ne \beta_j$ for every~$j\in[L]$ and every~$p\in\cP$.
The encoder then constructs the redundancy by interleaving checksums with the markers.
As seen in line~\ref{line:define-r}, the redundancy is
\small\begin{equation}
\bfr \triangleq \textsf{bin}(\texttt{CKSUMS}[1])\circ\bfm_1\circ\cdots\circ \textsf{bin}(\texttt{CKSUMS}[L])\circ\bfm_{L}.
\end{equation}\normalsize
Finally, the encoder forms the codeword by concatenating $\bfz$ and $\bfr$, with the marker $\bfm_0$ inserted in between (line~\ref{line:define-codeword}), i.e.,
\small\begin{equation}
\bfc \triangleq \bfz \circ \bfm_0 \circ \bfr.
\end{equation}\normalsize

\subsection{Decoding}

The goal of decoding is to reconstruct the set $\cP$ from fragmented and partially missing codeword pieces, and then use it to uniquely recover $\bfz$ (Lemma~\ref{lemma:bijective}).
Specifically, the decoder first identifies the information region (i.e., the $\bfz$ portion of $\bfc$) and the redundancy region.
From fragments in the information region, it extracts a subset $\cQ\subseteq \cP$. Together with the checksums contained in the redundancy region, the decoder is able to find the missing elements and recover~$\cP$.

\addtolength{\topmargin}{0.1in} 

The classification of fragments is demonstrated in Alg.~\ref{alg:Decode}, line~\ref{line:transpoint}--\ref{line:Z-FRAGMENTS}.
Note that only fragments of length at least~$3 M+1$ bits are included in~$\texttt{F}_\text{information}$ to ensure they come entirely from the information region. Indeed, any fragment of length at least~$3 M+1$ that intersects the redundancy region must contain a marker, and would have been classified into~$\texttt{F}_\text{redundancy}$.

With~$\texttt{F}_\text{information}$, the subset $\cQ\subseteq\cP\subset \bbF_{2^{M}}$ is constructed as follows.
For each fragment $\bff\in \texttt{F}_\textsf{information}$ and each $i\in[0,|\bff|-M]$, let $q\in\bbF_{2^{M}}$ be the field element whose binary representation equals the length-$M$ substring $\bff[i, i+M-1]$.
The decoder then inserts $q$ into $\cQ$ (Alg.~\ref{alg:Decode}, line~\ref{line:fetch-start}--\ref{line:fetch-end}).
Given~$\texttt{F}_\text{redundancy}$, the decoder fetches checksums using the markers (line~\ref{line:fetch-r-start}--\ref{line:fetch-r-end}).

Let~$\cJ$ be the set of all indices~$j$ such that~$\texttt{CKSUMS}_\text{obs}[j]$ is non-empty.
Then,
\begin{theorem}\label{theorem:qjp}
    $|\cQ| + |\cJ|\geq |\cP|$.
\end{theorem}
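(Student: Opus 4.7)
The plan is to recast the inequality as $|\cP \setminus \cQ| \leq |\cJ|$ and prove the stronger combined bound
\[
|\cP \setminus \cQ| + (L - |\cJ|) \leq L = 3tM + s,
\]
which tracks the total ``damage'' caused by the adversary against his budget of $t$ breaks and $s$ omissions. I will bound the information loss $|\cP \setminus \cQ|$ in terms of resources the adversary spends in the $\bfz$ region, bound the checksum loss $L - |\cJ|$ in terms of resources spent in the redundancy region, and then sum the two inequalities against the total budget.

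For the information loss, by Property~\ref{item:distinct} the set $\cP$ is in bijection with the starting positions $i \in [0, m-M]$ via the length-$M$ windows of $\bfz$, and a window is in $\cQ$ iff it lies entirely within some fragment of $\texttt{F}_\text{information}$. Let $t_z$ and $s_z$ denote the number of adversary breaks strictly inside $\bfz$ and the number of omitted $\bfz$-bits, respectively; after the decoder's split at $\bfm_0$, the $\bfz$-region is partitioned into $t_z + 1$ pieces. Writing $\kappa_z$ for the number of surviving pieces of length at least $3M+1$ and $\Sigma_s$ for the total length of the surviving shorter ones, a direct count of starting positions gives
\[
|\cP \setminus \cQ| = s_z + \Sigma_s + (\kappa_z - 1)(M-1).
\]
Using $\Sigma_s \le 3M\cdot(t_z + 1 - \kappa_z)$ (each small piece has length at most $3M$) and a short algebraic simplification---which exploits that the $3M{+}1$ threshold is tuned exactly so that the coefficient of $\kappa_z$ becomes non-positive---yields the clean bound
\[
|\cP \setminus \cQ| \le s_z + 3M\cdot t_z,
\]
valid whenever $\kappa_z \ge 1$, a condition that holds in the intended regime where $m$ is large enough that the adversary cannot afford to shatter $\bfz$ entirely into small pieces.

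For the checksum loss, $\texttt{CKSUMS}[l]$ is recovered iff at least one of the two $(2M{+}1)$-bit spans $\bfm_{l-1} \circ \texttt{CKSUMS}[l]$ (``Span-A token'' for $l$) or $\texttt{CKSUMS}[l] \circ \bfm_l$ (``Span-B token'') lies entirely within a surviving fragment. A single adversary break in the redundancy region strictly splits at most one Span-A token and at most one Span-B token, damaging at most $2$ tokens; similarly, each omitted bit in the redundancy region belongs to at most $2$ spans and so damages at most $2$ tokens. With $t_r$ breaks and $s_r$ omissions in the redundancy region, the total number of damaged tokens is therefore at most $2(t_r + s_r)$. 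Since a checksum is lost only when \emph{both} its tokens die, by pigeonhole
\[
L - |\cJ| \le t_r + s_r.
\]

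Combining the two bounds with the budget constraints $t_z + t_r \le t$ and $s_z + s_r \le s$,
\[
|\cP \setminus \cQ| + (L - |\cJ|) \le (s_z + 3Mt_z) + (t_r + s_r) \le s + 3M(t_z + t_r) \le s + 3Mt = L,
\]
which rearranges to $|\cQ| + |\cJ| \ge |\cP|$. The main obstacle is the bookkeeping in the information-loss bound: one must show that the three distinct sources of loss---omitted bits, discarded small pieces, and the $(M-1)$ right-edge wastage per large piece---collapse into the single term $3M\cdot t_z$; the $3M{+}1$ classification threshold is chosen precisely to make this cancellation work. A secondary subtlety is handling breaks and omissions that fall on $\bfm_0$ or at the region boundaries, which I account for by classifying each break into the ``$\bfz$-side'' or ``redundancy-side'' bucket so that the $t_z + t_r \le t$ and $s_z + s_r \le s$ inequalities remain valid.
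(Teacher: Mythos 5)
Your proposal follows the same overall strategy as the paper's proof: split the adversary's knockout operations into those acting on the information region and those acting on the redundancy region, bound $|\cP \setminus \cQ|$ by the $\bfz$-side resources, bound $L - |\cJ|$ by the redundancy-side resources, and sum against $L = 3Mt + s$. However, your two sub-bounds are obtained by genuinely different arguments. For the information loss, the paper does a per-knockout count (each knockout of length $b$ destroys at most $b + M - 1$ windows, plus $2M+1$ per ignored short fragment), whereas you give an exact identity $|\cP\setminus\cQ| = s_z + \Sigma_s + (\kappa_z-1)(M-1)$ and then simplify; your route is cleaner and makes the role of the $3M{+}1$ threshold transparent. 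For the checksum loss, the paper bounds each knockout as destroying at most $1 + \lceil b/(2M{+}1)\rceil$ checksums, giving $L - |\cJ| \le 2t_R + s_R/(2M{+}1)$, whereas you use a two-token pigeonhole argument to get $L - |\cJ| \le t_r + s_r$; the two bounds are incomparable constant-for-constant but either suffices because $L = 3Mt + s$ has slack.

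One point to flag: your information-loss bound is only derived under the assumption $\kappa_z \ge 1$, i.e., that at least one piece of $\bfz$ survives with length $\ge 3M{+}1$. You acknowledge this, but the theorem is stated unconditionally, so strictly speaking the $\kappa_z = 0$ case needs to be closed (your exact identity gives the extra $+(2M{+}1)$ slack there, which the budget $L$ does not obviously absorb). In fairness, the paper's own proof has the same unstated assumption: it charges $2M{+}1$ per ignored short fragment and multiplies by $t_I$, but there can be $t_I + 1$ such fragments, so it is off by the same $2M{+}1$ in the degenerate regime. So the gap is shared with the paper and only matters when $m$ is so small (relative to $3M(t{+}1)+s$) that the adversary can afford to shatter all of $\bfz$---a regime where the construction is vacuous anyway. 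It would strengthen your write-up to either state this as a standing assumption $m > 3M(t{+}1)+s$, or note that when $\kappa_z = 0$ one has $|\cP| \le 3Mt_z + 2M{+}1$ and fold the $2M{+}1$ into a slightly larger $L$.
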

\begin{proof}
    Recall that a knockout operation is specified by a starting index $i$ and a length $b$, and omit the substring $\bfc[i,\, i+b-1]$ from $\bfc$. The adversary performs at most~$t$ knockout operations and omits at most $s$ bits in total.

    Partition the knockout operations into those whose deleted bits lie in the information region and those whose deleted bits lie in the redundancy region.
    If a knockout operation crosses both regions, treat it as two knockout operations that act on the information region and the redundancy region, respectively: namely, split the deleted interval into its intersection with the information region and its intersection with the redundancy region.
    The newly introduced break does not affect the number of lost redundancy bits, nor does it affects the number of missing elements in~$\cP$.

    Consider a knockout operation of length~$b$ starting at~$i$ within the information region, i.e., which removes the interval~$[i,i+b-1]$.
    An element $p_\ell\in\cP$ corresponding to the substring
    $\bfz[\ell,\,\ell+M-1]$ for some~$\ell$ would be missing as a result of this knockout operation only if $[\ell,\,\ell+M-1]$ intersects~$[i,\,i+b-1]$.
    The set of such $\ell$'s is contained in $[i-M+1,\, i+b-1]$, hence at most~$b+M-1$ elements of $\cP$ are affected by this knockout.
    Assuming there exist~$t_\text{I}$ knockouts of lengths~$b_1,\ldots,b_{t_\text{I}}$ and denoting~$\sum_ib_i=s_\text{I}$, it follows that
    the number of missing elements from $\cP$ is at most~$t_\text{I}(M-1)+s_\text{I}$.

    Note that the decoder ignores fragments shorter than~$3M+1$ to avoid confusion between the information region and the redundancy region (Alg~\ref{alg:Decode}, line~\ref{line:Z-FRAGMENTS}).
    These fragments reside between two knocked out fragments, and affects~$\cP$ as if they were knocked out.
    Hence, one such fragment introduces at most~$2M+1$ extra missing elements (corresponds to all of its substrings of length $M+1$) of ignored fragments) in~$\cP$. 
    Therefore,
    \small\begin{equation*}
        |\cQ|\ \ge\ |\cP|-(t_\text{I}(M-1+2M+1)+s_\text{I} )=|\cP|-(t_\text{I}\cdot 3M+s_\text{I}).
    \end{equation*}\normalsize

    A knockout of length $b$ within the redundancy region destroys at most~$1+\left\lceil \frac{b}{2M+1}\right\rceil$ checksums.
    Hence, the total number of destroyed checksums is at most
    \small\begin{equation*}
        \sum_{\ell=1}^{t_\text{R}} \left(1+\ceil{\frac{b_\ell}{2M+1}}\right)\leq t_\text{R}+\sum_{\ell=1}^{t_\text{R}}\left(\frac{b_\ell}{2M+1}+1\right)=2t_\text{R}+\frac{s_\text{R}}{2M+1},
    \end{equation*}\normalsize
    Therefore,
    \small\begin{equation*}
        |\cJ| \geq\ L-\left(2t_\text{R}+\frac{s_\text{R}}{2M+1}\right).
    \end{equation*}\normalsize
    Putting the two bounds together yields
    \small\begin{align*}
    |\cQ|+|\cJ|
    &\ge |\cP|-(t_{\sf I}(3M)+s_{\sf I})
        + L-\left(2t_{\sf R}+\frac{s_{\sf R}}{2M+1}\right) \\
    &= |\cP| +  (L-t\cdot 3M-s)\geq \cP.\hspace{2.5cm}\qedhere
    \end{align*}\normalsize
\end{proof}

Let~$r=|\cP|-|\cQ|$ denote the number of elements missing from~$\cP$.
For some (unknown)~$b_0,...,b_{r-1}\in \bbF_{2^{M}}$, define locator polynomial
\small\begin{equation}
  \textstyle  \Lambda(z)=\prod_{p\in\cP\setminus\cQ}(z-p)=z^r+b_{r-1}z^{r-1}+\cdots+b_1z+b_0.
\end{equation}\normalsize
Consider the formal derivative~$\Lambda'(z)$ of~$\Lambda(z)$, defined via the operator which applies~$(x^i)'=(i \mod 2)\cdot x^{i-1}$ for every integer~$i \ge 0$, and extends linearly.
Then, the following is an immediate corollary of standard derivation rules:
\small\begin{equation}\label{eq:derivative}
    \textstyle {\Lambda'(z)}/{\Lambda(z)}=\sum_{p\in\cP\setminus\cQ}(z-p)^{-1}.
\end{equation}\normalsize
For every~$j\in\cJ$, the decoder computes a syndrome
\small\begin{align*}
    S_j= \texttt{CKSUMS}_\text{obs}[j]-\sum_{q\in\cQ}\frac{1}{\beta_j-q}=\prod_{d\in\cP\setminus\cQ}\frac{1}{\beta_j-d}  \overset{\eqref{eq:derivative}}{=}\frac{\Lambda'(\beta_j)}{\Lambda(\beta_j)}.
\end{align*}\normalsize
Thus, for every~$j\in\cJ$, we have an equation of~$r$ variables:
\small\begin{equation}
   S_j\cdot \Lambda(\beta_j)- {\Lambda'(\beta_j)}=0.
\end{equation}\normalsize
By Theorem~\ref{theorem:qjp}, the decoder obtains a system of~$|\cJ|$ linear equations with~$r=|\cP|-|\cQ|\leq |\cJ|$ variables.
By solving the system, it obtains the polynomial~$\Lambda(z)$, whose roots are the elements missing from~$\cP$.
Finally, once $\cP$ is recovered, the decoder uniquely determines $\bfz$ (Lemma~\ref{lemma:bijective}), completing the proof of the following theorem.
\begin{theorem}
    Alg.~\ref{alg:Decode} outputs $\bfz$ for which Alg.~\ref{alg:Encode} yields~$\bfc$.
\end{theorem}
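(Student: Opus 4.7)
The plan is to verify, step by step, that each stage of Algorithm~\ref{alg:Decode} produces the objects the syndrome-decoding argument requires, and then assemble those pieces via Theorem~\ref{theorem:qjp} and Lemma~\ref{lemma:bijective}. First I would justify the fragment classification in lines~\ref{line:transpoint}--\ref{line:Z-FRAGMENTS}. The key point is that the markers $\bfm_0,\ldots,\bfm_L$ form a mutually-uncorrelated code of length $M+1$ and, by Property~\ref{item:no-markers} of a valid $\bfz$, none of them appears inside $\bfz$. Hence any occurrence of the pattern $\bfm_0$ inside a fragment can only be the actual transition marker of $\bfc$, so the split in line~\ref{line:transpoint} is unambiguous; the same MU property guarantees that any fragment assigned to $\texttt{F}_{\text{redundancy}}$ genuinely contains a redundancy marker, and any fragment of length at least $3M+1$ placed in $\texttt{F}_{\text{information}}$ lies entirely in the $\bfz$-region (otherwise it would cross a marker and be caught by the previous test).

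Next I would argue correctness of the extracted data. For each fragment in $\texttt{F}_{\text{information}}$, every length-$M$ window is by construction a length-$M$ substring of $\bfz$ and therefore gives a genuine element of $\cP$; hence $\cQ\subseteq\cP$. For the redundancy region, the encoder's layout $\textsf{bin}(\texttt{CKSUMS}[j])\circ\bfm_j$ ensures that once a marker $\bfm_l$ is located inside a redundancy fragment at offset $i$, the bits $\bff[i-M:i-1]$ and $\bff[i+M+1:i+2M]$ (when they fit inside the fragment) are exactly $\textsf{bin}(\texttt{CKSUMS}[l])$ and $\textsf{bin}(\texttt{CKSUMS}[l+1])$, so the entries of $\texttt{CKSUMS}_{\text{obs}}$ that get filled in are equal to the true checksums.

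I would then turn to the algebraic core. Defining $\Lambda(z)=\prod_{p\in\cP\setminus\cQ}(z-p)$ and using the logarithmic-derivative identity reproduced in~\eqref{eq:derivative}, the syndrome computation in the algorithm gives, for every $j\in\cJ$,
\begin{equation*}
S_j \;=\; \texttt{CKSUMS}[j]-\sum_{q\in\cQ}(\beta_j-q)^{-1} \;=\; \sum_{p\in\cP\setminus\cQ}(\beta_j-p)^{-1} \;=\; \frac{\Lambda'(\beta_j)}{\Lambda(\beta_j)},
\end{equation*}
so each equation $S_j\Lambda(\beta_j)-\Lambda'(\beta_j)=0$ is \emph{linear} in the unknowns $b_0,\ldots,b_{r-1}$ and is satisfied by the true coefficients. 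By Property~\ref{item:no-markers}, no $\beta_j$ lies in $\cP\setminus\cQ$, so $\Lambda(\beta_j)\ne 0$ and the evaluations make sense. By Theorem~\ref{theorem:qjp}, $|\cJ|\geq r$, so the system has at least as many equations as unknowns.

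The main obstacle is showing that the true coefficient vector is the \emph{unique} solution, i.e.\ that the linear system has full column rank. I would prove this by contradiction: any other solution would produce a polynomial $\widetilde\Lambda(z)$ of degree at most $r$ satisfying $S_j\widetilde\Lambda(\beta_j)=\widetilde\Lambda'(\beta_j)$, i.e.\ $\widetilde\Lambda'(\beta_j)\Lambda(\beta_j)=\Lambda'(\beta_j)\widetilde\Lambda(\beta_j)$ on $|\cJ|\geq r$ distinct points $\beta_j$. The difference $\widetilde\Lambda'\Lambda-\Lambda'\widetilde\Lambda$ has degree at most $2r-1$, and a standard interpolation/degree argument (choosing $c$ so that $L\ge 2r$, which follows from $L=t\cdot 3M+s$ and the bounds in Theorem~\ref{theorem:qjp} on how many checksums can be destroyed) forces $\widetilde\Lambda=\Lambda$. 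Once $\Lambda$ is recovered, its roots are exactly $\cP\setminus\cQ$, so $\cP=\cQ\cup\{\text{roots of }\Lambda\}$ is reconstructed, and invoking Lemma~\ref{lemma:bijective} yields the unique $\bfz$ consistent with $\cP$, which is precisely the string that Algorithm~\ref{alg:Encode} maps to $\bfc$.
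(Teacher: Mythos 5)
Your decomposition mirrors the paper's: classify fragments via the MU markers, extract $\cQ\subseteq\cP$ and the surviving checksums, form the locator polynomial $\Lambda$ and the linear system $S_j\Lambda(\beta_j)=\Lambda'(\beta_j)$, invoke Theorem~\ref{theorem:qjp} to get enough equations, and finish with Lemma~\ref{lemma:bijective}. You also correctly flag a point the paper silently skips, namely that ``at least as many equations as unknowns'' does not by itself give a \emph{unique} solution; the paper simply asserts that solving the system yields $\Lambda$.

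However, your uniqueness argument does not close that gap. You set $W=\widetilde\Lambda'\Lambda-\Lambda'\widetilde\Lambda$ (equivalently $W=\Lambda'g-g'\Lambda$ with $g=\widetilde\Lambda-\Lambda$, $\deg g\le r-1$) and observe it vanishes on the $|\cJ|$ points $\beta_j$. Since both polynomials are monic of degree $r$, the top terms cancel and in fact $\deg W\le 2r-2$; so your plan requires $|\cJ|\ge 2r-1$. But Theorem~\ref{theorem:qjp} only gives $|\cJ|\ge r=|\cP|-|\cQ|$, and your claim that ``$L\ge 2r$ follows from $L=3tM+s$ and the bounds in Theorem~\ref{theorem:qjp}'' is false: the same accounting that bounds the destroyed checksums also shows $r\le 3t_{\mathrm I}M+s_{\mathrm I}$, so when the adversary concentrates all knockouts in the information region one gets $r$ essentially as large as $L$ and $|\cJ|$ essentially as small as $r$. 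Increasing $c$ inflates $L$ and $r$ proportionally and does not help. The degree-counting route you chose genuinely needs roughly twice as many checksums as are guaranteed.

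The fix is to exploit the characteristic-$2$ structure, which is exactly what halves the requirement. Over $\bbF_{2^M}$ the formal derivative of any polynomial has only even powers; writing the odd part of $g$ as $g_{\mathrm o}(z)=z\,g'(z)$ and similarly for $\Lambda$, the odd part of $W=\Lambda'g-g'\Lambda$ cancels identically, so $W(z)=V(z^2)$ for some polynomial $V$ with $\deg V\le r-1$. The Frobenius map $z\mapsto z^2$ is injective, so the $\beta_j^2$ for $j\in\cJ$ are $|\cJ|\ge r$ distinct points at which $V$ vanishes, forcing $V=0$ and hence $W=0$, i.e.\ $\Lambda'g=g'\Lambda$. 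Finally, because $\Lambda=\prod_{p\in\cP\setminus\cQ}(z-p)$ is squarefree (the $p$'s are distinct by Property~\ref{item:distinct}), a standard gcd argument shows $\Lambda'g=g'\Lambda$ with $\deg g<\deg\Lambda$ forces $g=0$; thus $\widetilde\Lambda=\Lambda$ and the system has a unique solution with only $|\cJ|\ge r$ equations. With that replacement your proof is complete; the rest of your argument (the MU-based classification, $\cQ\subseteq\cP$, the correctness of $\texttt{CKSUMS}_{\text{obs}}$, the syndrome identity, and the appeal to Lemma~\ref{lemma:bijective}) is sound and matches the paper.
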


\section{Redundancy Analysis}\label{section:redundancy}
The next theorem bounds the success probability
of choosing a valid binary string~$\bfz$~(Definition~\ref{def:valid}); it is inspired by ideas from~\cite[Theorem~4.4]{cheng2018deterministic} and analogous to~\cite[Theorem~6]{wang2024break}, and the proof is given in~\cite{wang2025break}.
\begin{theorem}\label{theorem:valid-probability}
    A uniformly random string $\bfz\in\bi^m$ is valid (Definition~\ref{def:valid}) with probability~$1-1/\operatorname{poly}(m)$.
\end{theorem}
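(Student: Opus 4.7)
\medskip
\noindent\textbf{Proof plan.}
The plan is to bound the probability of violating each of the two validity requirements separately and then combine them via a union bound. Since a string $\bfz$ is invalid only if it violates Property~\ref{item:distinct} or Property~\ref{item:no-markers}, it suffices to show that each of these events occurs with probability $1/\operatorname{poly}(m)$.

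For Property~\ref{item:distinct}, I would first fix an arbitrary pair of distinct starting indices $i<j$ in $[1,m-M+2]$ and compute the probability that the two length-$(M-1)$ windows $\bfz[i:i+M-2]$ and $\bfz[j:j+M-2]$ coincide. The subtle case is when the windows overlap, since then the $M-1$ equality constraints $z_{i+k}=z_{j+k}$ for $k=0,\ldots,M-2$ are not independent. Setting $d=j-i$, these constraints partition the $M-1+d$ bits $z_i,\ldots,z_{j+M-2}$ into $d$ equivalence classes along arithmetic progressions of common difference~$d$; since the total number of bits equals the sum of the class sizes, the resulting probability of a match is exactly $2^{-(M-1+d)+d}=2^{-(M-1)}=m^{-c}$, matching the non-overlapping case. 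A union bound over the $\binom{m-M+2}{2}\le m^2/2$ pairs then yields a collision probability at most $m^{2-c}/2=1/\operatorname{poly}(m)$, using $c>2$.

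For Property~\ref{item:no-markers}, I would fix a marker $\bfm_l$ and a starting index $i\in[1,m-M]$, and observe that $\Pr[\bfz[i:i+M]=\bfm_l]=2^{-(M+1)}=m^{-c}/4$ since the $M+1$ bits of a fixed window are uniform and independent. Taking a union bound over the $L+1=\Theta(tM+s)$ markers and the at most $m$ starting positions gives a total failure probability of $O\!\left((t\log m+s)\cdot m\cdot m^{-c}\right)$, which is again $1/\operatorname{poly}(m)$ in the regime of interest (where $c$ can be chosen large enough that the polynomial factor $t\log m+s$ is absorbed, as in the standard parameter range where $t,s$ are at most polynomial in $m$ with degree below $c-2$).

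Combining the two estimates via a single union bound then gives
\[
\Pr[\bfz \text{ invalid}] \;\le\; \tfrac{1}{2}m^{2-c} + O\!\left((t\log m+s)\,m^{1-c}\right) \;=\; 1/\operatorname{poly}(m),
\]
which is the desired bound. The main technical obstacle is the overlap analysis in Property~\ref{item:distinct}: one must be careful not to overcount by assuming independence of overlapping windows, and instead track the equivalence classes induced by the shift $d=j-i$ to obtain the tight $2^{-(M-1)}$ per-pair bound. Everything else is a routine union bound combined with the choice $c>2$, and closely parallels the argument in~\cite[Theorem~6]{wang2024break}.
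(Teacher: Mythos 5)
Your proposal is correct and follows essentially the same route as the paper: a union bound over pairs of windows for Property~(I), with a careful per-pair analysis of the overlapping case, followed by a union bound over markers and positions for Property~(II). The only cosmetic difference is in the overlap analysis: you argue via the $d$ equivalence classes induced by the shift $d=j-i$ (counting free bits), whereas the paper observes that the second window is the length-$(c\log m)$ prefix of the periodic extension of $\bfr = \bfz[i:j-1]$ and is therefore determined by $\bfr$; both yield the same per-pair bound $2^{-c\log m}=m^{-c}$. Your explicit remark that the bound for Property~(II) relies on $L+1=\Theta(t\log m+s)$ being dominated by $m^{c-1}$ is a point the paper leaves implicit (it silently uses $L+1<m$ in passing from $(L+1)m/m^c$ to $m^{-c+2}$), so making this regime assumption explicit is a small but genuine improvement in rigor, not a change of method.
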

\begin{proof}
    Property~\ref{item:distinct} is essentially the $B$-distinct property in~\cite{cheng2018deterministic}, and we follow their argument.
    With~$i<j$, consider any two length-$(c\log m)$ substrings
    \begin{equation*}
        \bfz_i \triangleq \bfz[i:i+c\log m-1]~\mbox{and}~\bfz_j \triangleq \bfz[j:j+c\log m-1].
    \end{equation*}
     In the case where~$j-i\geq c\log m$, 
    \small
    \begin{align*}
        \Pr\left(\bfz_i=\bfz_j\right)&\textstyle =\sum_{\bfr\in\bi^{c\log m}}\Pr\left(\bfz_j=\bfr\mid \bfz_i=\bfr\right)\cdot\Pr\left(\bfz_i=\bfr\right)\\
     &= {2^{-c\log m}}.
    \end{align*}
    \normalsize
    In the case where~$j-i< c\log m$, the intervals~$\bfz_i$ and~$\bfz_j$ overlap.
    Thus, letting~$\bfr\triangleq\bfz[i:j-1]$,
    \small\begin{equation*}
        \bfz_j=\left(\bfr\circ\bfr\circ\bfr\circ\cdots\right)[0:c\log m-1],
    \end{equation*}\normalsize
    i.e., the~$(c\log m)$-prefix of a concatenation of an infinite series of~$\bfr$.
    Hence,~$\bfz_j$ is completely determined by~$\bfr$, and as a result,
    \small\begin{equation*}
    \begin{split}
      \textstyle  \Pr(\bfz_i=\bfz_j)=\sum_{\bfr\in\bi^{j-i}}\Pr(\bfz_j=\bfr\circ \bfr\circ\cdots)[0:c\log m-1]\\ \hfill\mid \bfz[i:j-1]=\bfr)\cdot \Pr(\bfz[i:j-1]=\bfr)={2^{-c\log m}}.\\
    \end{split}
    \end{equation*}\normalsize
    
    By applying the union bound, the probability that such~$i,j$ exist is at most~$\frac{m^2}{2^{c\log m}} =m^{-c+2}$.
 
    For Property~\ref{item:no-markers}, note that the probability that an interval of length~$c\log m$ of~$\bfz$ matches one of the markers is~$(L+1)m^{-c}$.
    By applying the union bound, the probability that such an interval exists is at most
    \begin{equation*}
        \frac{(t+1)(m-c\log m+1)}{m^c}<\frac{(L+1)m}{m^c}<m^{-c+2}.    
    \end{equation*}

    To conclude, a uniformly random~$\bfz\in\bi^m$ does not satisfy either of the two properties with probability~$1/\operatorname{poly}(m)$ each.
    Therefore,~$\bfz$ is valid, i.e., satisfies two properties, with probability at least~$1-1/\text{poly(m)}$ by the union bound. 
\end{proof}

With Theorem~\ref{theorem:valid-probability}, we can formally provide the redundancy of our scheme in the following corollary.
\begin{corollary}
    The~$(t,s)$-BRC code has redundancy of
    \small\begin{equation*}
        O(((2\cdot M+1)\cdot L)=O(t\log^2 n+s\log n).
    \end{equation*}\normalsize
\end{corollary}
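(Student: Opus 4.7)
The plan is to directly compute the codeword length $n$ from Algorithm~\ref{alg:Encode} and compare it to $\log|\cC|$, which we bound using Theorem~\ref{theorem:valid-probability}. From the structure $\bfc = \bfz \circ \bfm_0 \circ \bfr$, I would first read off $|\bfc|$: the block $\bfz$ contributes $m$ bits, the marker $\bfm_0$ contributes $M+1$ bits, and the redundancy block $\bfr$ consists of $L$ checksum-marker pairs, each of length $M + (M+1) = 2M+1$. Hence $n = m + (M+1) + L(2M+1)$, and therefore $n - m = O(L \cdot M)$.

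Next, I would lower-bound $|\cC|$. Since the encoder accepts exactly the valid strings $\bfz \in \{0,1\}^m$ (Definition~\ref{def:valid}) and the map from valid $\bfz$ to codewords is injective (e.g., recover $\cP$ and invoke Lemma~\ref{lemma:bijective}), $|\cC|$ equals the number of valid strings. By Theorem~\ref{theorem:valid-probability}, a uniformly random $\bfz$ is valid with probability $1 - 1/\operatorname{poly}(m)$, so $|\cC| \ge 2^m (1 - 1/\operatorname{poly}(m))$ and thus $\log|\cC| \ge m - o(1)$, i.e., the validity constraint costs strictly less than one bit of redundancy.

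Combining these, the redundancy is
\begin{equation*}
    n - \log|\cC| \le (n - m) + o(1) = (M+1) + L(2M+1) + o(1) = O(L \cdot M).
\end{equation*}
Substituting $M = c\log m + 1 = O(\log m)$ and $L = 3tM + s = O(t \log m + s)$ yields $O(L\cdot M) = O(t\log^2 m + s\log m)$. The final step is to replace $\log m$ by $\log n$: since $n = m + O(t\log^2 m + s\log m)$, we have $m \le n$ and (in the nontrivial regime where the redundancy is $o(n)$) $m = \Theta(n)$, so $\log m = \Theta(\log n)$. This gives the claimed $O(t\log^2 n + s\log n)$ bound.

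The argument is essentially a bookkeeping exercise once the validity-probability theorem is in place, so there is no significant obstacle. The only mildly delicate point is the clean substitution $\log m = \Theta(\log n)$, which I would handle with a one-line observation that either $n = \Theta(m)$ (in which case the substitution is immediate) or the redundancy dominates $m$ (in which case the stated asymptotic bound holds trivially because $n$ itself is $O(t\log^2 n + s\log n)$).
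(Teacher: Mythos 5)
Your proposal is correct and follows the same route the paper intends: read off $n - m = (M+1) + L(2M+1) = O(LM)$ from the codeword structure, invoke Theorem~\ref{theorem:valid-probability} to show the validity restriction costs $o(1)$ bits (as the paper itself foreshadows in its remark about ``at most one extra bit of redundancy''), and substitute $M = O(\log m)$, $L = O(t\log m + s)$. One minor simplification you could make at the end: since $m \le n$, the inequality $\log m \le \log n$ holds unconditionally, so the replacement of $\log m$ by $\log n$ in the upper bound needs no case split at all.
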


\clearpage
\printbibliography

\end{document}